\newcommand{\abs}[1]{\left\vert#1\right\vert}
\newcommand{\Mat}[1]{\boldsymbol{#1}}
\newtheorem{thm}{Theorem}
\newtheorem{prop}[thm]{Proposition}
\def\E{{\mathbb E}}        
\def\myscale{0.65}
\newcommand{\comment}[1]{}
\newcommand{\paren}[1]{\left( #1 \right)}
\newcommand{\bracket}[1]{\left[ #1 \right]}
\begin{document}




\title{On the Impact of Random Index-Partitioning on Index Compression}


\numberofauthors{3}
\author{
\alignauthor M. Feldman\\
       \affaddr{CS Department, Technion}\\
       \affaddr{Haifa 32000, Israel}\\
        \email{moranfe@cs.technion.ac.il}
\alignauthor
R. Lempel\\
       \affaddr{Yahoo! Labs}\\
       \affaddr{Haifa 31905, Israel}\\
       \email{rlempel@yahoo-inc.com}
\alignauthor O. Somekh\\
       \affaddr{Yahoo! Labs}\\
       \affaddr{Haifa 31905, Israel}\\
       \email{orens@yahoo-inc.com}
\and 
\alignauthor K. Vornovitsky\\
       \affaddr{CS Department, Technion}\\
       \affaddr{Haifa 32000, Israel}\\
       \email{kolman@cs.technion.ac.il}
}

\maketitle

\begin{abstract}

The performance of processing search queries depends heavily on the stored index size. Accordingly, considerable research efforts have been devoted to the development of efficient compression techniques for inverted indexes. 
Roughly, index compression relies on two factors: the ordering of the indexed documents, which strives to position similar documents in proximity, and the encoding of the inverted lists that result from the ordered stream of documents.
Large commercial search engines index tens of billions of pages of the ever growing Web. The sheer size of their indexes dictates the distribution of documents among thousands of servers in a scheme called local index-partitioning, such that each server indexes only several millions pages. Due to engineering and runtime performance considerations, random distribution of documents to servers is common. However, random index-partitioning among many servers adversely impacts the resulting index sizes, as it decreases the effectiveness of document ordering schemes.\\
We study the impact of random index-partitioning on document ordering schemes. We show that index-partitioning decreases the aggregated size of the inverted lists logarithmically with the number of servers, when documents within each server are randomly reordered. On the other hand, the aggregated partitioned index size increases logarithmically with the number of servers, when state-of-the-art document ordering schemes, such as lexical URL sorting and clustering with TSP, are applied.
\comment{Focusing on URL sorting, we observe that preserving the order across web hosts is not crucial, and similar index sizes are achieved with intra-host URL sorting only. Moreover, we show that clustering pages by host names without regard to intra-host ordering already brings significant compression benefits.}
Finally, we justify the common practice of randomly distributing documents to servers, as we qualitatively show that despite its ill-effects on the ensuing compression, it decreases key factors in distributed query evaluation time by an order of magnitude as compared with partitioning techniques that compress better.

\comment{The performance of processing search queries depends heavily on the stored index size. Accordingly, considerable research efforts have been devoted to the development of efficient compression techniques for inverted indexes. Roughly, index compression relies on two factors: the ordering of the indexed documents, which strives to position similar documents in proximity, and the encoding of the inverted lists that result from the ordered stream of documents.
Large commercial search engines index tens of billions of pages of the ever growing Web. The sheer size of their indexes dictates the distribution of documents among thousands of servers in a scheme called local index-partitioning, such that each server indexes only several millions pages. Due to engineering and runtime performance considerations, random distribution of documents to servers is common. However, random index-partitioning among many servers adversely impacts the resulting index sizes, as it decreases the effectiveness of document ordering schemes.\\
We study the impact of index-partitioning on document ordering schemes. We show that index-partitioning decreases the aggregated size of the inverted lists logarithmically with the number of servers, when documents within each server are randomly reordered. On the other hand, the aggregated partitioned index size increases logarithmically with the number of servers, when state-of-the-art document ordering schemes, such as lexical URL sorting and clustering with TSP, are applied.
Focusing on URL sorting, we observe that preserving the order across web hosts is not crucial, and similar index sizes are achieved with intra-host URL sorting only. Moreover, we show that clustering pages by host names without regard to intra-host ordering already brings significant compression benefits.
Finally, we justify the common practice of randomly distributing documents to servers, as we qualitatively show that despite its ill-effects on the ensuing compression, it decreases key factors in distributed query evaluation time by an order of magnitude as compared with partitioning techniques that compress better.}

\comment{It is well known that processing performance of search queries depends heavily on the stored index size. Therefore, considerable research effort has been devoted to the development of efficient compression techniques for inverted indexes. Roughly, index compression relies on two factors: the ordering of the indexed documents, which strives to position similar documents in proximity, and the encoding of the inverted lists that result from the ordered stream of documents.\\
Large commercial search engines index tens of billions of pages of the ever growing Web. Due to the sheer size of their indexes, they distribute the documents among thousands of servers in a scheme called {\em local index-partitioning}, such that each server indexes only several millions pages. Due to various engineering considerations, random distribution of documents to servers is common. However, random index-partitioning among many servers has a profound effect on the resulting index sizes, since it adversely impacts the effectiveness of document ordering techniques.\\
We study the impact of index-partitioning on several document ordering schemes. Focusing on the aggregated size of the inverted lists, 
we show that index-partitioning decreases that size logarithmically with the number of servers, when documents within each partition are randomly reordered. On the other hand, the aggregated index size increases when partitioned logarithmically with the number of servers, when state-of-the-art document ordering schemes, such as lexical URL sorting and clustering with TSP, are applied.\\
Focusing on URL sorting, we observe that preserving the order across web hosts is not crucial, and similar index sizes are achieved with intra-host URL sorting only. Moreover, we show that clustering pages by host names without regard to intra-host ordering already brings significant compression benefits. These observations hold also when index-partitioning is applied.\\
We also study the impact of index partitioning on query processing time and quantify the merits of the commonly used random partitioning of document to servers. Numerical evaluations reveal that the order of query processing time with random partitioning is tenfold faster than that achieved by a URL based index partitioning index, already for a modest number of servers. We conclude that there is a tradeoff between aggregated index size and query processing time. However, it goes against the convention held for single server index systems, and smaller aggregated index size does not entail better query processing time.\\  
All of our findings are validated by experimental evaluation performed on a large benchmark collection of web pages.}

\vspace{-0.1cm}
\category{H.3.m}{Information Systems}{Information Storage and Retrieval}[]
\vspace{-0.1cm}
\terms{Experimentation, Performance}
\vspace{-0.1cm}
\keywords{Inverted Index, Index Compression, Document Reordering, Index-Partitioning, Query Processing Time}
\end{abstract}

\\
\section{Introduction}
\label{sec: introduction}

The searchable Web spans tens of billions of pages, yet search engine users expect fresh and relevant search results to be delivered within less than a second. Serving simultaneously thousands of queries, web search engines use an {\em inverted index}, a data structure that supports efficient retrieval of documents containing a set of terms given by the user's query. Due to the huge number of web pages and the resulting  amount of data, the index is partitioned over thousands of servers, where each server typically stores and processes the inverted index of only several millions documents \cite{Arasu-etal_ACM-Internet-Tech_2001,Barroso2003,Zobel-Moffat-ACM_CSUR2006,IIR}. 
At query time, the query is sent to all servers for processing, and the top results retrieved from all servers are merged to produce the final results, which are returned to the user.

The inverted index data structure contains a {\em postings list} for each unique term appearing in the corpus. The postings list of term $t$ consists of the list of document identifiers\footnote{Although terms frequencies and offsets within the document occupy a major portion of modern inverted indexes, we focus here on the documents identifiers only.} (docIds) containing $t$. The documents within each list are typically sorted by increasing docIds values, and the list is represented by encoding the gaps (called {\em dGaps}) between successive docIds. Another data structure in an inverted index is the {\em lexicon}, or {\em dictionary}, which is a lookup table that for each term $t$ in the corpus, points to the postings list corresponding to $t$ \cite{Arasu-etal_ACM-Internet-Tech_2001,Zobel-Moffat-ACM_CSUR2006,ModernIR2}.

\comment{
The benefits bestowed by inverted index size reduction have encouraged a vast research effort along the two system degrees of freedom mentioned above, i.e., docId assignment \cite{Blandford-Blelloch_DCC-2002}\ -\nocite{Shieh-Chen-Shann-Chung-IPM2003}\nocite{Blanco-Barreiro-IR2006}
\nocite{Blanco-Barreiro-IR2006}
\nocite{Yan-Ding-Suel-WWW2009}\cite{Ding-Attenberg-Suel-WWW2010} and dGap compression \cite{Anh-Moffat-IR2005}\ -\ \nocite{Boldi-Vigna-InternetMath2005}\cite{Zhang-Long-Suel-WWW2008}.
}
      
Index size has an important effect on system performance. In addition to the direct reduction in memory and disk space, more compact indexes lead to savings in I/O transfers and increase the hit rate of memory caches, offering an improvement in overall query processing throughput \cite{Zhang-Long-Suel-WWW2008,Yan-Ding-Suel-WWW2009}. Consequently, a large body of work has focused on index compaction and compression methods. The structure described above leaves two main degrees of freedom for compression optimization: (a) the assignment of docIds to documents (also referred to as document reordering); and (b) the actual encoding of the dGaps into bits (also referred to as dGap compression  \cite{witten1,Moffat-Stuiver-IR2000,Boldi-Vigna-InternetMath2005,Haman-Thesis2005,
Anh-Moffat-IR2005,Zhang-Long-Suel-WWW2008}). This work focuses on the former.

The basic idea behind an effective docId assignment is to place ``similar'' documents close to each other, hence, potentially reducing the dGaps since similar documents contain many common terms. Such effective assignment produces highly clustered posting lists where long ``runs'' of small dGaps are separated by large dGaps. In contrast, a random assignment of docIds would result in dGaps that approximately follow a Geometric distribution within each postings list \cite{Chierichetti-Kumar-Raghavan-WWW2009}. The problem of finding the optimal docId assignment can be explicitly expressed in closed form, but is, unfortunately, NP-hard \cite{Gonzalez-PhD-2008}. Therefore, most works on document assignment proposed various heuristics that includes approximations to the traveling salesman problem (TSP), solutions based on clustering algorithms, and solutions based on the natural URL lexicographical ordering of web pages \cite{Blandford-Blelloch_DCC-2002,Shieh-Chen-Shann-Chung-IPM2003,Silvestri-Perego-Orlando-SAC2004,
Blanco-Barreiro-IR2006,Silvestri-ECIR2007,Yan-Ding-Suel-WWW2009,Ding-Attenberg-Suel-WWW2010}.

All aforementioned efforts focused on compacting the inverted index of a single server. However, large  corpora are indexed over thousands of servers, with each server handling only several million documents 
\cite{Arasu-etal_ACM-Internet-Tech_2001,Barroso2003,Zobel-Moffat-ACM_CSUR2006,IIR}. 
In order to better balance the number of result documents resulting on each server, thereby decreasing query processing time (see our experiments in Section~\ref{sec:queryproc}),
documents are often distributed randomly among the servers
\cite{Arasu-etal_ACM-Internet-Tech_2001,Barroso2003,prefetching,IIR}. 
As first noted in \cite{Yan-Ding-Suel-WWW2009}, this index-partitioning operation may have a profound effect on document assignment algorithms, since similar documents (e.g., pages of the same web host) are often routed to different servers. This work examines the impact of random index-partitioning on the effectiveness of docId assignment algorithms that aim to compress the inverted index. 
Our experiments are performed on the 25 million web page TREC .gov2 collection. Our main contributions are the following:
\begin{itemize}
    \item We showcase the interplay between \textit{random} index-parti-tioning and compression. \comment{As far as we are aware, this novel problem has not been studied previously\footnote{We note that it was raised as a concern in \cite{Yan-Ding-Suel-WWW2009}. \textbf{<} Another relevant work in this context is \cite{Lavee_Liberty_Lempel_Somekh-WWW2011-UNPUB}, where \textit{non-random} online document routing schemes are used for index compactness purposes \textbf{>}.}.}
    \item We quantitatively and analytically show that the performance gap between effective docId assignment heuristics and ineffective ones diminishes as the index is \textit{randomly} partitioned over more servers. For example, with dGap Delta encoding, the total length of the inverted lists actually decreases logarithmically with the number of partitions when docIds are assigned randomly. On the other hand, partitioning causes that size to increase logarithmically with the number of partitions when effective docId assignments such as URL sorting, and clustering with TSP, are applied. Similar trends are reported for dGap block PForDelta encoding as well.
    \item We study experimentally the factors that make the URL-based assignment perform well in practice. We show that inter-host ordering hardly matters, and that clustering pages by hosts with arbitrary intra-host ordering already brings significant compression benefits. \comment{Furthermore, we define a distance function of a docId assignment from the URL-based assignment, and show that this distance predicts well, in some sense, the performance of the docId assignment.}
\item 
We justify the common practice of randomly distributing documents to servers, as we qualitatively show that despite its ill-effects on the ensuing compression, it decreases key factors in distributed query evaluation time by an order of magnitude as compared with the better compressing URL-based partitioning.
\comment{
We study the impact of random and URL-based partitioning of documents to servers and demonstrate the benefits of the former in terms of query processing time. We quantitatively and analytically explain the behavior of the query processing time for the various partitioning schemes as function of the number of partitions \textbf{SIGIR}.  
}

The rest of this work is organized as follows. Section~\ref{sec: background} provides background and surveys related work. The experimental setup is described in section~\ref{sec: setup}. Experimental results and analytical insight of the impact of partitioning on index sizes are reported in Sections~\ref{sec: results} and 
\ref{sec: Analytical Insight on Results}, respectively. The impact of index partitioning on query processing time is considered in Section \ref{sec:queryproc}. Finally, we conclude in Section \ref{sec: conc}.  
\end{itemize}

\section{Background and Prior Work}\label{sec: background}

\subsection{Index Partitioning}\label{sec: Index-Partitioning}

\comment{
\begin{figure}
\centering
\includegraphics[scale=0.4]{Figure_Partitioning-Strategies.eps}
\caption{Document- and term-based partitioning of a five documents sample corpus over a cluster of three nodes.}
\label{fig: Index-Partitioning}
\end{figure}
}
The sheer size of the Web, the enormous number of search queries, and the required low latency, enforce a distributed inverted index architecture \cite{Arasu-etal_ACM-Internet-Tech_2001,Zobel-Moffat-ACM_CSUR2006,Barroso2003}. 
To support these requirements, both \textit{distribution} and \textit{replication} principles are applied. Replication (or \textit{mirroring}) means making enough identical copies of the system so that the required query load can be served, and is beyond the scope of this work. Distribution means the way the inverted index is partitioned across a collection of nodes.

The two main strategies of partitioning an inverted index are \textit{local index-partitioning} and \textit{global index-partitioning} \cite{Badue-BaezaYates-RibeiroNeto-Ziviani-SPIRE2001,Moffat-Webber-Zobel-BaezaYates-INFIR2007}. 
According to the local index-partitioning strategy (or \textit{document based partition}), each node is responsible for a disjoint subset of documents in the collection. Each search query is sent to all nodes, each of which returns its top ranking documents for the query. Those lists are then combined in some way to provide the end result. In the global index-partitioning strategy (or \textit{term based partition}), terms are divided into disjoint subsets, such that each node stores postings lists only for a subset of terms.

Due to various theoretical and practical considerations, large-scale search engines follow the local inverted index-partitioning strategy distributing documents across the nodes \cite{Barroso2003,Badue-BaezaYates-RibeiroNeto-Ziviani-SPIRE2001,Moffat-Webber-Zobel-BaezaYates-INFIR2007}. Documents can be distributed to nodes using different policies. For example, the \textit{hash distribution policy} allocates documents to nodes in a random fashion by hashing the documents' URLs to yield a node identifier \cite{prefetching, IIR}. Other  policies such as \textit{round-robin distribution} are also possible \cite{Hirai-Raghavan-GarciaMolina-Paepcke-WWW2000}.

While random distribution of documents to nodes is used by commercial search engines \cite{Arasu-etal_ACM-Internet-Tech_2001,Barroso2003,prefetching, IIR}, other distribution schemes were considered in distributed information retrieval systems and peer-to-peer networks. For instance, in \cite{Xu-Croft_SIGIR1999} (see also \cite{Kulkarni_Callan-LSDSIR-2010} for a more recent work) the authors used a two-pass K-means clustering algorithm and a KL-divergence distance metric to organize a document collection into 100 topical clusters (or \textit{shards}) and demonstrated the benefits of selectively searching only a few shards per query. Query logs were used by the authors of \cite{Puppin-Silvestri-Laforenza_InfoScale2006} (see also \cite{Puppin-Silvestri-Perego-BaezaYates_ACM-TIS2010} for a more recent work) to organize a document collection into multiple shards. Selectively searching shards defined by these clusters was found to be more effective than selectively searching randomly defined shards. Non-random distribution of documents in a distributed search engine was recently considered in \cite{Lavee_Liberty_Lempel_Somekh-WWW2011-UNPUB}, where the authors treat the routing of documents to nodes as an online problem in an incremental indexing setting. Under a model where routed documents are appended to the existing index partitions, the authors demonstrate a tradeoff between the compression of a locally-partitioned index and the balanced distribution of documents from the same host across the index partitions.

\subsection{Inverted Index Compression}

As mentioned in the Section \ref{sec: introduction}, we consider a simplified model of an inverted index in which the postings list of term $t$ holds the 
docIds containing $t$, sorted by increasing value. Denote the list by $d^t_1, d^t_2, \ldots, d^t_{n_t}$, where $d^t_i$ denotes the docId of the $i$'th document containing $t$ out of $n_t$ such documents. The list is actually represented by encoding the first docId and the sequence of gaps (dGaps) between successive identifiers thereafter, i.e. $d^t_1, d^t_2-d^t_1, \ldots, d^t_{n_t}-d^t_{n_t-1}$.
\comment{
Here we focus on a single node of a local index-partitioning architecture, and describe its structure which is dictated by the requirements mentioned earlier.

Let $D=\{d_1,\ d_2,\ldots,\ d_{\abs{D}}\}$ be a sub-collection of $\abs{D}$ documents, and let $T=\{t_1,\ t_2,\ldots,\ t_{\abs{T}}: t_i\in D\}$ be the set of all terms which appear in that sub-collection $D$. A local inverted-index structure is an optimized instantiation of the term-document (termDoc) matrix $\Mat{H}$ where each term and document correspond to a row and column respectively. Hence, $\Mat{H}_{i,j}$ represents the association between term $i$and document $j$, e.g., the term frequency or a binary indicator. Since most document contain only a small subset $T$, and most terms appear in a small subset of $D$, the termDoc matrix $\Mat{H}$ is normally extremely sparse. The inverted-index stores the non-zero entries of $\Mat{H}$ using \textit{posting lists}, where the list corresponds to term $i$ includes the columns numbers of the non-zero entries of found in the $i$th row of $\Mat{H}$. More precisely, the posting lists stores the first number and the gaps (or dGaps) between successive numbers hereafter.
}
The two degrees of freedom available for compressing the size of the lists are (a) docId assignment; and (b) dGap encoding. As we focus on the former, we start by briefly reviewing the latter. dGap encoding techniques aim to compress a sequence of integers. 
\comment{
We briefly mention several techniques in the sequel.

\paragraph{Gamma Encoding:} for $x\in Z^+$ represent $x-2^{\lfloor \log_2x\rfloor}$ in binary and prefix this by unary representation of the binary length of $x$. The number of bits used is given by $S_\gamma(x)=1+2\lfloor \log x\rfloor$. This encoding performs well for very small numbers, but is not appropriate for larger numbers.

\paragraph{Delta Encoding:} for $x\in Z^+$ represent $x-2^{\lfloor \log_2x\rfloor}$ in binary and prefix this by Gamma encoding of the binary length of $x$. The number of bits used is given by 
\begin{equation}\label{eq: delta dgap encoding length}
    S_\delta(x) = 1+\lfloor\log_2x \rfloor + \log_2(1+\lfloor\log_2 x\rfloor)\ ,
\end{equation}
This encoding performs better than Gamma encoding for large numbers.

\paragraph{PForDelta \cite{Haman-Thesis2005}:} first determines a value $b$ such that most (say 90\%) dGaps are less than $2^b$ and thus can be expressed by $b$ bits each. The remaining values are then encoded separately. This technique achieves fast decoding while maintaining small compressed size.  
}
The literature contains schemes that encode each gap individually, e.g. Gamma, Delta, Golomb-Rice \cite{witten1} and Zeta \cite{Boldi-Vigna-InternetMath2005} encodings, as well as schemes that encode certain blocks of gaps, e.g. PForDelta \cite{Zhang-Long-Suel-WWW2008,Haman-Thesis2005} and Simple9 \cite{Anh-Moffat-IR2005}. Additionally, the Interpolative Encoding scheme \cite{Moffat-Stuiver-IR2000} is applied directly on the docIds rather than their dGaps, and works well for clustered term occurrences. 


In general, the docId assignment problem seeks a permutation of the 
documents that minimizes the inverted-index size under a specific dGap encoding scheme. As shown in \cite{Gonzalez-PhD-2008}, this problem is NP-hard and various heuristics are used to provide approximations.    

The size of an inverted-index is a function of the dGaps, which themselves depend on the way docIds are assigned to documents. All effective dGap encoding techniques represent smaller numbers with fewer bits (about logarithmic in the number value). Hence, assigning docIds in a way which results in smaller dGaps is the key for better compression. This principle drives most works dealing with docId assignment, which accordingly strive to assign close docIds to ``similar'' documents, i.e. documents that share many terms.

Technically, most works define a graph $G=(D,E)$, where $D$ is the set of documents, and $E$ is a set of edges representing the similarity between two documents $d_i,d_j\in D$. One line of work started by \cite{Shieh-Chen-Shann-Chung-IPM2003} traverses the graph $G$ to find the maximal weight path connecting all the nodes, assigning docIds accordingly. This is equivalent to the NP-Hard \textit{traveling salesman problem} (TSP). Several TSP approximations were applied for docId assignment in \cite{Shieh-Chen-Shann-Chung-IPM2003,Blanco-Barreiro-IR2006,Ding-Attenberg-Suel-WWW2010}. In \cite{Shieh-Chen-Shann-Chung-IPM2003}, a simple greedy nearest neighbors (GNN) approach is used to add one edge at a time. To reduce the computational load, \cite{Blanco-Barreiro-IR2006} uses singular value decomposition (SVD) to reduce the dimensionality of the term-document matrix. To scale up TSP-based schemes \cite{Ding-Attenberg-Suel-WWW2010} proposes a new framework based on computing TSP on a reduced sparse graph obtained through \textit{locality sensitive hashing}.

In yet another line of work, the nodes of $G$ are clustered according to their similarity and close docIds are assigned to the nodes (documents) within each cluster. A top-down approach is used in \cite{Blandford-Blelloch_DCC-2002}, where the whole collection is recursively split into sub-collections, inserting ``similar'' nodes into the same sub-collections. Then, the sub-collections are merged into an ordered group of nodes. A bottom-up approach called k-scan was proposed in \cite{Silvestri-Perego-Orlando-SAC2004}. A hybrid method which combines k-scan clustering and TSP for intra-cluster docId assignment is proposed by \cite{Gonzalez-PhD-2008}, and will be used in the experiments reported in this paper.

A different approach, which is both highly scalable and highly effective, was proposed for Web collections in \cite{Silvestri-ECIR2007}. It assigns docIds according to the lexicographically sorted order of the documents' URLs, utilizing the fact that URL similarity is a strong indicator of document similarity. The scheme was found to perform remarkably well on various Web collections indexed as a whole. It was not, to the best of our knowledge, examined for partitioned collections.


In all the aforementioned works, a heuristic of docId assignment or an encoding of dGaps were empirically tested against several collections and compared to the results of other works. In contrast, \cite{Chierichetti-Kumar-Raghavan-WWW2009} analyzes the compressibility of a collection whose documents are generated by a simple probabilistic model in which terms are chosen independently from a given distribution.

\section{Experimental Setup}\label{sec: setup}

We use the TREC .gov2 Web corpus, a collection of about 25.2 million pages crawled from the \textsf{gov} domain, for the experiments.
\comment{
We note that since it was crawled almost to exhaustion, GOV2 is very ``dense'' \cite{Ding-Attenberg-Suel-WWW2010}. For some of the experiments we use only part of .gov2. In such cases we take the required number of documents according to their URL lexicographical order (i.e. URL ``continuous'').
}
After parsing, tokenizing (with standard English \textit{stopward removal} and no \textit{stemming}) and removing all empty documents, we are left with 24.9 million documents, 74.5 million distinct terms, and 5,705.2 million postings (distinct term appearances in documents). Whenever we partition the corpus over $m$ servers, documents are assigned to servers independently and uniformly at random. We then apply some docId assignment and dGap encoding schemes across all servers. 
\comment{We note that while Lucene is used for its text processing capabilities and other utilities, we do not report Lucene index sizes in the sequel. Rather, index sizes are reported using the {\em bits per posting} metric, defined below.}
Index sizes are reported using the {\em bits per posting} metric, defined below.

\subsection{The Bits per Posting Metric}

\comment{Let a corpus with ${\cal N}$ overall postings be indexed across $m$ partitions, and let $T_i$ denote the set of distinct terms on the $i$'th partition. Let $t$ be a term appearing in $n_t$ documents in some partition, and denote those docIds by $d^t_1 < d_2^t <\ldots <d^t_{n_t}$. Assume all postings lists are encoded using Delta encoding. Then, the overall size of all postings lists on the $i$'th partition, ${\cal P}_i$, is given by
\[ {\cal P}_i =  \sum_{t \in T_i} \left(\ \delta (d_1^t) + \sum_{j=2}^{n_t} \delta(d_j^t-d_{j-1}^t)\ \right)\ ,\]
where $\delta(k)$ is the length (in bits) of the Delta encoding of the integer $k$:
\[ \delta(k) = 1+\lfloor\log_2k \rfloor + \log_2(1+\lfloor\log_2 k\rfloor)\ .\]
The overall size of the postings across the $m$ partitions, ${\cal P}$, is defined as 
\[ {\cal P} = \sum_{i=1}^m {\cal P}_i\ . \]
For PForDelta encoding scheme, each posting list on every server is processed according to the scheme presented in \cite{Zhang-Long-Suel-WWW2008}, with block length of 128 dGaps, and threshold of 90\%. Shorter blocks at the end of long posting lists and short posting lists, down to 64 dGaps are encoded in a similar fashion, while blocks of less than 64 dGaps are simply Delta encoded.}

\comment{Let a corpus with ${\cal N}$ overall postings be indexed across $m$ partitions, and let $T_i$ denote the set of distinct terms on the $i$'th partition. Let $t$ be a term appearing in $n_t$ documents in some partition, and denote those docIds by $1\le d^t_1 < d_2^t <\ldots <d^t_{n_t}$. Then, the overall size of all postings lists on the $i$'th partition, ${\cal P}_i$, is given by
\[ {\cal P}_i =  \sum_{t \in T_i} S\left(d_1^t,d_2^t,\ldots,d^t_{n_t}\right)\ ,\]
where $S(\cdot)$ is the length (in bits) of encoding the given integer sequence. The overall size of the postings across the $m$ partitions, ${\cal P}$, is defined as 
\[ {\cal P} = \sum_{i=1}^m {\cal P}_i\ . \]
We experiment with Delta and PForDelta encoding schemes. For Delta encoding
\[ S\left(d_1^t,d_2^t,\ldots,d^t_{n_t}\right)=\delta (d_1^t) + \sum_{j=2}^{n_t} \delta(d_j^t-d_{j-1}^t)\ ,\]
where $\delta(k)$ is the length (in bits) of the Delta encoding of the integer $k$:
\[ \delta(k) = 1+\lfloor\log_2k \rfloor + \log_2(1+\lfloor\log_2 k\rfloor)\ .\]
For PForDelta encoding scheme, each posting list is processed according to the scheme presented in \cite{Zhang-Long-Suel-WWW2008}, with block length of 128 dGaps, and threshold of 90\%. Shorter blocks at the end of long posting lists and short posting lists, down to 64 dGaps are encoded in a similar fashion, while blocks of less than 64 dGaps are simply Delta encoded.}

Let a corpus with ${\cal N}$ overall postings be indexed across $m$ partitions, and let $T_i$ denote the set of distinct terms on the $i$'th partition. Let $t$ be a term appearing in $n_t$ documents in some partition, and denote those docIds by $1\le d^t_1 < d_2^t <\ldots <d^t_{n_t}$. Then, the overall size of all postings lists on the $i$'th partition, ${\cal P}_i$, is given by
\[ {\cal P}_i =  \sum_{t \in T_i} S\left(d_1^t,d_2^t-d_1^t,\ldots,d^t_{n_t}-d^t_{n_{t}-1}\right) ,\]
where $S(\cdot)$ is the length (in bits) of encoding the given integer sequence. The overall size of the postings across the $m$ partitions, ${\cal P}$, is defined as 
\[ {\cal P} = \sum_{i=1}^m {\cal P}_i\ . \]
We experiment with Delta and PForDelta encoding schemes. For Delta encoding
\[ S\left(d_1^t,d_2^t-d_1^t,\ldots,d^t_{n_t}-d^t_{n_{t}-1}\right)=\delta (d_1^t) + \sum_{j=2}^{n_t} \delta(d_j^t-d_{j-1}^t)\ ,\]
where $\delta(k)$ is the length (in bits) of the Delta encoding of the positive integer $k$:
\[ \delta(k) = 1+\lfloor\log_2k \rfloor + 2\lfloor\log_2(1+\lfloor\log_2 k\rfloor)\rfloor\ .\]
For PForDelta encoding scheme, each posting list is processed according to the scheme presented in \cite{Zhang-Long-Suel-WWW2008}, with block length of 128 dGaps, and threshold of 90\%. Shorter blocks at the end of long posting lists and short posting lists, down to 64 dGaps are encoded in a similar fashion, while blocks of less than 64 dGaps are simply Delta encoded.

We further define the overhead ${\cal OH}$ of a partitioned index as the space taken by the $m$ dictionaries of the individual partitions. Each entry of the $i$'th dictionary is a pointer into the sequence of postings lists on the $i$'th server, and hence requires $\log_2 {\cal P}_i$ bits\footnote{For simplicity, we assume that individual posting, as well as inverted lists, can start on arbitrary bit boundaries.}. Overall,
\[ {\cal OH} = \sum_{i=1}^m \vert T_i \vert \log_2{{\cal P}_i}\ .\]
Finally, the {\em bits per posting} metric comes in two flavors, with and without overhead. Those are simply $\frac{{\cal P}+{\cal OH}}{{\cal N}}$ and $\frac{{\cal P}}{{\cal N}}$, respectively.

\subsection{DocId Assignment Schemes}
As stated earlier, we are mainly interested in two aspects: (a) studying the impact of random index-partitioning on the bits per posting metric, and (b) gaining further insight into the power of the URL-based docId assignment scheme. We thus experiment with the following five docId assignment schemes:
\begin{description}
\item[Random assignment (RND):] this method serves as a baseline for comparison purposes.
\item[URL-based sorting (URL):] following \cite{Silvestri-ECIR2007}, the documents are sorted lexicographically based on their URL\footnote{The host name components are first inverted, see \cite{Silvestri-ECIR2007} for details.} and docIds are assigned accordingly. 
\item[Clustering assignment (KSCN-TSP):] we adopt a procedure presented in \cite{Blanco-Barreiro-IR2006} where each server's collection is partitioned into $K$ clusters, and GNN approximation of TSP is used to assign the docIds within each cluster. We set the cluster size (and the number of clusters) to the square root of the server's corpus size, which is known to provide fair results. This heuristic represents, in this work, the state-of-the-art of schemes that are URL-agnostic.
\item[Intra-host URL-based sorting (IH-URL):] here, the hosts are randomly ordered, and URL-based ordering is kept within the hosts only. This scheme, when compared to the conventional URL scheme, should reveal the contribution of the inter-host lexicographical ordering to the power of URL-based docId assignment.
\item[Intra-host random assignment (IH-RND):] here, documents of the same host are assigned with consecutive docIds, but both the hosts and the documents within each host are randomly ordered. This scheme should reveal whether the power of URL-based assignment stems merely from the fact that documents of the same host are clustered together, or actually depends on the lexicographic ordering within each host.
\end{description} 

\comment{
Our experiments are composed of the following main phases:
\begin{itemize}
\item \textbf{Corpus generation}: a preliminary phase of indexing .gov2 by Lucene. In case a smaller corpus is required, the resulting Lucene full index is used to produce the new index where URL continuous bulk of documents are selected. 

\item \textbf{Index-partitioning:} We use a random and round-robin local index partitioning, and generate the corresponding Lucene sub-indexes.

\item \textbf{docIds assignment:} for each of the $m$ sub-collections we generate the document reordering vectors according to the five docId assignment schemes mentioned earlier (i.e., RND, URL, KSCN-TSP, ID-URL, and ID-RND). 
}
When comparing URL-agnostic docId assignment schemes (represented here by KSCN-TSP) to the URL sorting scheme over partitioned indexes, one hypotheses comes to mind:
URL-agnostic schemes should outperform URL assignment when the corpus is highly partitioned, since they have the degree of freedom to arrange documents by similarity that transcends diluted URL patterns. 
\comment{Such patterns should lose their effectiveness as the number of partitions grows.}
\comment{
\subsection{Distance of an Assignment from ID-URL}\label{sec: Distance from URL Sorting} 
Given an ordering $\pi$ of documents (derived from some heuristic), we measure how different it is from what ID-URL ordering provides. This means that we measure whether $\pi$ (1) tends to group pages of the same domain together; and (2) orders same-domain pages in a manner consistent with lexicographic sorting of their URLs (or, equivalently, the inverse sorting). 

The score of $\pi$, $S(\pi)$, is a sum of scores for all domains $D$ represented in $\pi$. The score of a domain $D$ in $\pi$ is denoted by $S(\pi_D)$, and is a product of a clustering score $C(\pi_D)$ and the difference from URL ordering score $U(\pi_D)$:
\begin{equation}
    S(\pi) = \sum_D S(\pi_D) = \sum_D C(\pi_D) U(\pi_D)\ .
\end{equation}
Consider a particular domain $D$, whose pages are in locations $d_1 < d_2 <\ldots<d_k$ in $\pi$. Let $d^*$ be the median page among those, i.e. the page that has the same number of domain $D$ pages before it and after it (up to a difference of 1 if $k$ is even). The clustering score of $D$ in $\pi$ is defined as:
\begin{equation}
    C(\pi_D) = \frac{1}{X}  \sum_{j=1}^k | d_j-d^* |\ ,
\end{equation}
where $X = k^2/4$ for even $k$ and $X=(k^2-1)/4$ for odd $k$.

We sum the distances of all pages in $D$ from the median page $d^*$, and normalize by effectively dividing by what a consecutive ordering of all $D$-pages would have produced (denoted by $X$). Therefore, $C(\pi_D) \geq 1$, with equality achieved by domains whose pages are assigned consecutive docIDs.

Turning to the internal ordering of the $D$-pages, we assess its difference from URL ordering by a variant of \textit{Kendall}'s Tau. ``Pure'' \textit{Kendall}'s Tau,  denoted by $\tau(\pi_D)$, counts for all pairs of pages $j,k \in D$, the number of cases where $j$ is before $k$ in $\pi$ but after $k$ in URL ordering.
\begin{equation}
    U(\pi_D) = \min{ \left\{\tau(\pi_D), \left(\begin{array}{c}n \\ 2\end{array}\right)-\tau(\pi_D)\right\}}\ ,
\end{equation}
Since ordering the $D$-pages in reverse URL order produces the same d-gaps, we take the smaller of the two distances - the one from pure URL ordering, and the one from the reverse ordering.}

\section{Experimental Results}\label{sec: results}

\begin{figure*}[t]
\centering
\includegraphics[scale=\myscale]{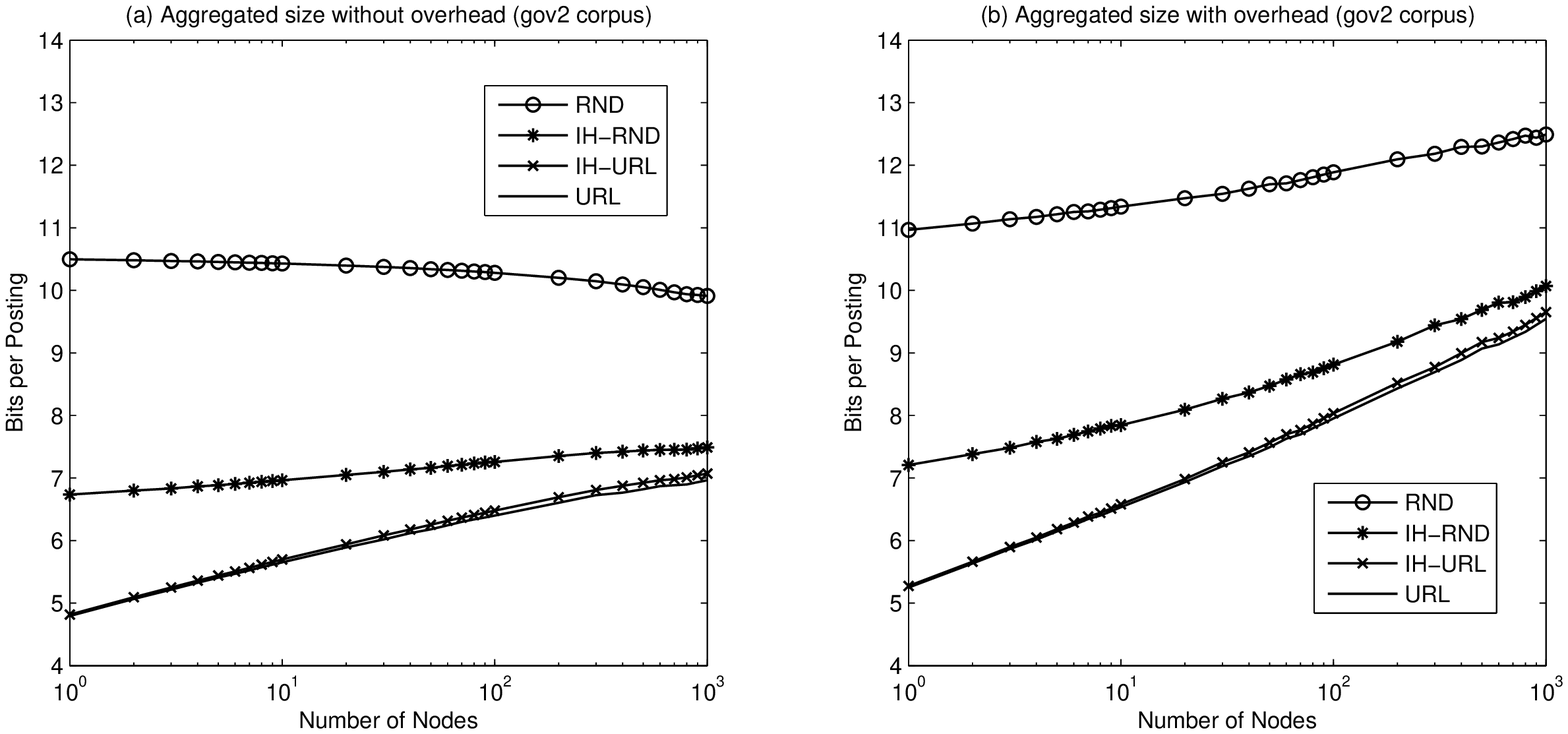}
\caption{Bits per posting as function of the number of nodes for different docId assignment schemes and Delta encoding applied to .gov2 corpus, (a) without and (b) with overhead.}
\label{fig: gov2}
\end{figure*}

\begin{figure*}[t]
\centering
\includegraphics[scale=\myscale]{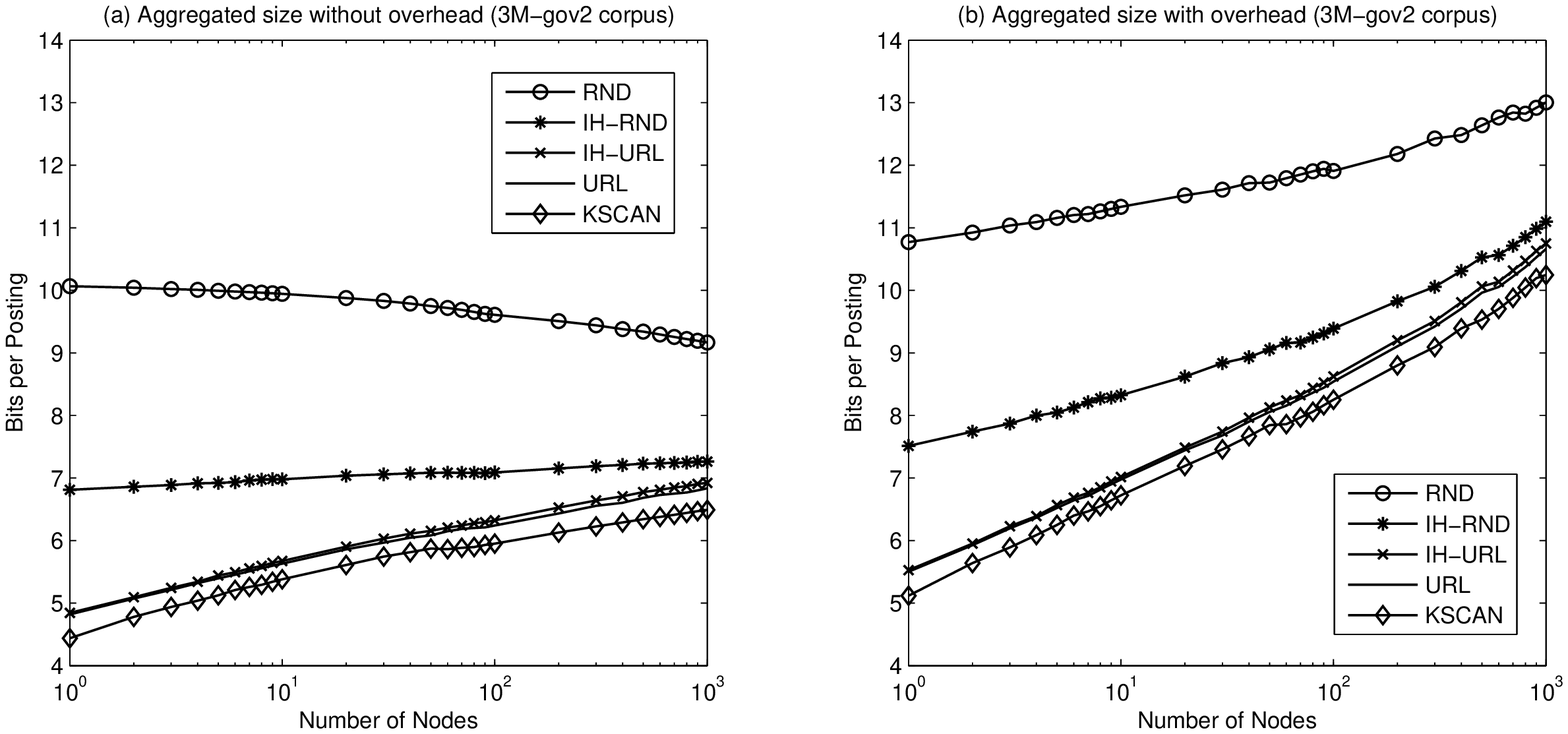}
\caption{Bits per posting as function of the number of nodes for different docId assignment schemes and Delta encoding applied to a bulk of 3 million URL-continuous documents from .gov2 corpus, (a) without and (b) with overhead.}
\label{fig: gov2-3M}
\end{figure*}

\begin{figure*}[t]
\centering
\includegraphics[scale=\myscale]{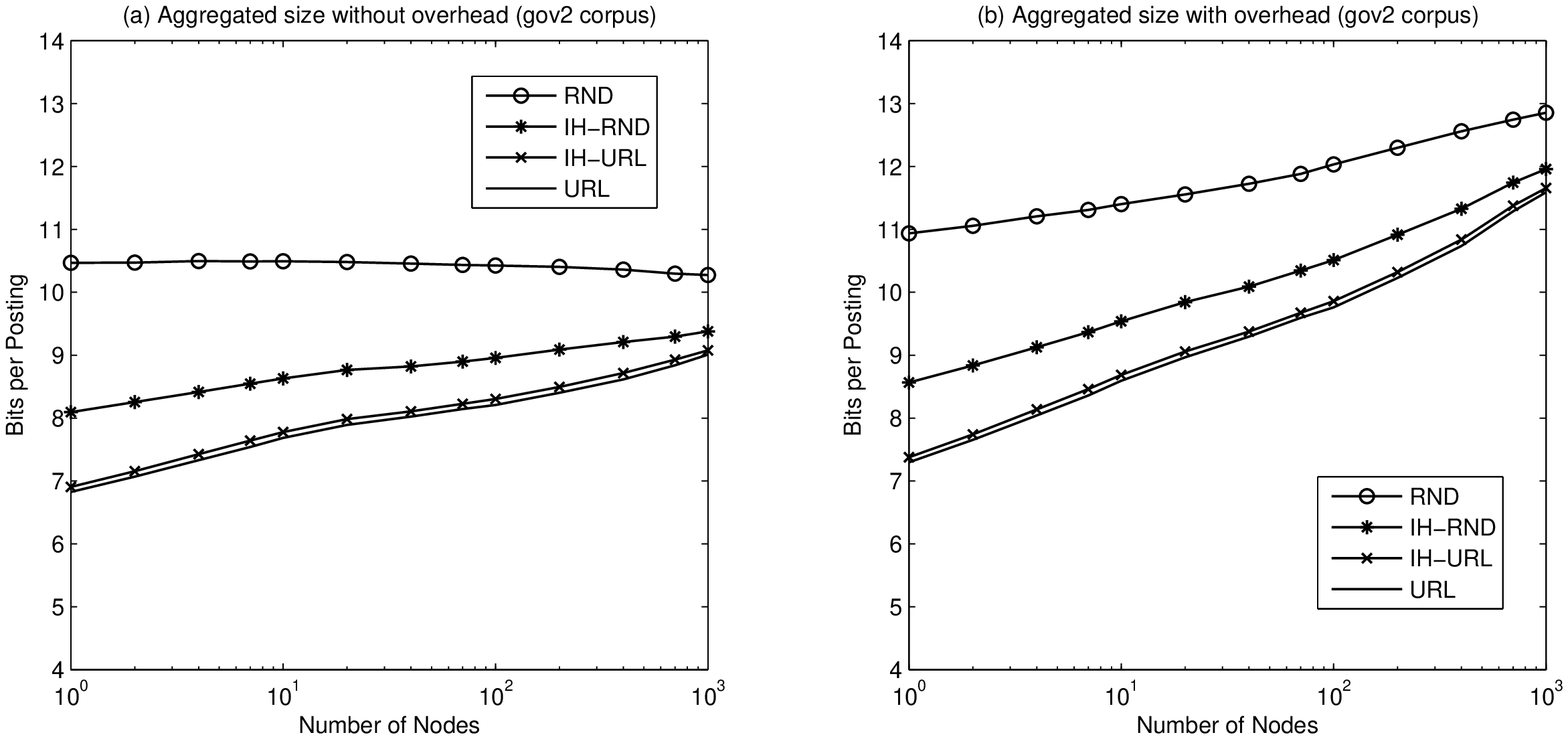}
\caption{Bits per posting as function of the number of nodes for different docId assignment schemes and PForDelta encoding applied to .gov2 corpus, (a) without and (b) with overhead.}
\label{fig: gov2 PForDelta}
\end{figure*}

The bits per posting measure is plotted as function of the number of nodes with and without overhead in Figures \ref{fig: gov2}.a and \ref{fig: gov2}.b, respectively. The curves are plotted for the URL, IH-URL, IH-RND, and RND docId assignment schemes using the full .gov2 corpus and Delta encoding. Figure \ref{fig: gov2}.a demonstrates that without overhead, the aggregated size decreases with the number of nodes for the RND assignment and increases for the URL bases schemes (i.e., URL, IH-URL, IH-RND). In particular, the ratio between the sizes of the RND and the URL assignments decreases from $2.2$ when no partitioning is applied, to $1.45$ when the corpus is partitioning over $m=10^3$ nodes. When the overhead is included, the sizes achieved by all schemes increase with the number of nodes, although the performance of URL based schemes degrades at a faster rate than that of the RND scheme. As can be seen, in the region of interest, the curves are approximately linear in $\log m$. Beyond this region, as the number of nodes increases, the URL based curves will coincide with that of the RND, and in the limit where each document is placed on a different node the number of bits per posting of all schemes go to one.

In Figures \ref{fig: gov2-3M}.a and \ref{fig: gov2-3M}.b, the bits per posting measure with and without overhead is plotted as function of the number of nodes, respectively. The curves are plotted for the URL, IH-URL, IH-RND, RND, and KSCAN-TSP docId assignment schemes using 3 million pages taken as a URL-continuous bulk from .gov2 corpus\footnote{A smaller corpus is used due to run time considerations of the KSCAN-TSP scheme.} and Delta encoding. It can be seen that the compression achieved by the KSCAN-TSP scheme behaves similarly to that of the URL based schemes, and increases with the number of nodes. We note that although KSCAN-TSP is expected to perform as the RND scheme in the limit, where the number of nodes is large, one could expect that KSCAN-TSP will degrade more gracefully with the number of nodes than URL. This is since KSCAN-TSP (and other state-of-the-art schemes) have an additional degree of freedom over URL sorting, in their ability to reorder the local documents after partitioning. However, as seen here, both URL and KSCAN-TSP degrade at similar rates.

Comparing the figures \ref{fig: gov2} and \ref{fig: gov2-3M} produced for the full .gov2 corpus and for the 3 million document sub-corpus respectively, using Delta encoding, reveals that 
the shapes of the curves and the relations between them are similar. This strengthens our conjecture that the same behavior also hold for web scale collections.

Turning to PForDelta encoding, Figures \ref{fig: gov2 PForDelta}.a and \ref{fig: gov2 PForDelta}.b plot the bits per posting measure as function of the number of nodes with and without overhead, respectively. The curves are plotted for the URL, IH-URL, IH-RND, and RND docId assignment schemes using the full .gov2 corpus and PForDelta encoding. In general, the trends visible for Delta encoding and all docId assignment schemes (Figure \ref{fig: gov2-3M}) are also visible here for the PForDelta encoding curves. Nevertheless, comparing figures \ref{fig: gov2} and \ref{fig: gov2 PForDelta} it is observed that while the RND assignment curve decreases in a lower rate than that of the Delta encoding curve, the URL based sorting curves are increasing in a higher rate than those of the Delta encoding\footnote{Also visible is the superiority of Delta encoding over the specific variant of the PForDelta scheme used here, 
which is consistent with the results reported in \cite{Yan-Ding-Suel-WWW2009}.}. 
\comment{
While the trends visible for the URL based sorting and Delta encoding are clearly visible also here for PForDelta encoding, a slightly different behavior is observed for the RND assignment. Without overhead, the aggregated size still decreases with the number of nodes for the RND assignment but in a lower rate compared to that of the Delta encoding curve.}

Another observation, visible in all figures, relates to the true nature of URL sorting. 
By merely clustering each host's documents together, the IH-RND scheme achieves 75\% to 85\% (for Delta encoding over the range of node numbers) of the performance improvement of URL sorting over random assignment. Moreover, the performance of IH-URL is almost identical to that of URL. To be precise, URL is slightly better than IH-URL (about $1\%$ on the average) over the range of node numbers. We conclude that the impressive effectiveness of URL sorting for Web corpora such as .gov2, stems mostly from the act of clustering documents of the same host together (i.e., IH-RND scheme). Keeping the lexicographical order within each host is the secondary contributor to the effectiveness of the URL scheme, and when combined with host clustering (i.e., IH-URL), it provides almost identical performance to that of URL sorting. On the other hand, keeping the lexicographical order across hosts has a negligible effect, and hosts can be placed randomly without degrading the URL scheme's effectiveness. These conclusions, while of little practical implication, provide some insight into the true nature of URL sorting.

We note that these results were all generated under random document distribution to nodes (see Section \ref{sec: Index-Partitioning}). Experimental results (not presented here) with round-robin distribution did not produce qualitatively different results. In addition, experimental results (also not presented here) reveal a small variance between multiple runs. Hence, the corpora used are large enough that self averaging is dominant. Hence, multi runs are redundant and all the presented results are of a single run experiments.

\section{Analytical Insight on Results}\label{sec: Analytical Insight on Results}

This section provides analytical and illustrative explanations to some of our experimental results. In particular, we prove that for random docId assignment and individual dGap logarithmic encoding (e.g., Delta encoding), the average aggregated index size (ignoring overhead) is a non-increasing function of the number of partitions. Conversely, for URL (and IH-URL) assignment and individual dGap logarithmic encoding, we demonstrate that partitioning increases the aggregated size. We note that the impact of index-partitioning on docId assignment and PForDelta encoding is much harder to explain since this encoding scheme works in blocks of dGaps, and is left for further study.

\subsubsection*{Index-Partitioning and Random docId Assignment}

Our model for index-partitioning under random docId assignment is as follows. Let there be $\abs{D}$ documents and $m$ nodes, and assume for simplicity that $m$ divides $\abs{D}$ and that the documents are evenly distributed across the nodes. We first draw uniformly at random (u.a.r.) a permutation $\pi$ over the documents, and then draw an equal partitioning (denoted by $g^m$) of $m$ sets of $\frac{\abs{D}}{m}$ documents each, also u.a.r. There are $\abs{D}! \paren{m!}^{-{\frac{\abs{D}}{m}}}$ such partitions. The document sets get assigned to the servers, with the internal order on each server respecting (being consisting with) $\pi$. We aim to prove that
the following expectation, denoted $\Delta^m$, is non-negative:
\[\Delta^m = E_{\pi, g} [ {\cal P}(\pi) - {\cal P}^m(\pi,g) ] \geq 0\ ,\]
where ${\cal P}(\pi)$ denotes the length of all postings lists when the documents are ordered by $\pi$ and indexed on a single node, and ${\cal P}^m(\pi,g)$ denotes the aggregated length of all postings lists when the documents are partitioned by $g$ into $m$ nodes, with the internal order in each node respecting $\pi$. Now,
\begin{equation*}
\begin{aligned}
\Delta^m &=\sum_g \frac{\paren{m!}^{\frac{\abs{D}}{m}}}{\abs{D}!} \sum_\pi \frac{1}{\abs{D}!}[ {\cal P}(\pi) - {\cal P}^m(\pi,g) ]\\
&= \frac{\paren{m!}^{\frac{\abs{D}}{m}}}{\paren{\abs{D}!}^2} \sum_g \sum_\pi [{\cal P}(\pi) - {\cal P}^m(\pi,g)]\ .
\end{aligned}
\end{equation*}

Looking at the inner sum, observe that for a fixed partition $g$ and every permutation $\pi$ there exists a single permutation $\tilde{\pi}$ that represents the concatenation of the $m$ partial permutations. Furthermore, for a fixed $g$, the mapping between $\pi$ and $\tilde{\pi}$ is 1:1 and onto. We now define the {\em m-slice} partitioning of a $\abs{D}$-sized permutation, denoted $g_m$, as the process of assigning the first $\frac{\abs{D}}{m}$ documents to the first node, and so on, until assigning the last $\frac{\abs{D}}{m}$ documents to the $m$'th node. By definition applying $g$ on $\pi$ is equivalent to applying $g_m$ on $\tilde{\pi}$, and so:
\begin{equation*}
\begin{aligned}
\Delta^m &= E_{\pi, g} [ {\cal P}(\pi) - {\cal P}^m(\pi,g) ]\\
&= \frac{\paren{m!}^{\frac{\abs{D}}{m}}}{\paren{\abs{D}!}^2} \sum_g \sum_\pi[ {\cal P}(\pi) - {\cal P}^m(\tilde{\pi},g_m) ]\ .
\end{aligned}
\end{equation*}
As $\pi$ goes over all $\abs{D}!$ permutations, so does $\tilde{\pi}$, and thus
\begin{equation*}
\begin{aligned}
\Delta^m &= \frac{\paren{m!}^{\frac{\abs{D}}{m}}}{\paren{\abs{D}!}^2} \sum_g \sum_\pi[ {\cal P}(\tilde{\pi}) - {\cal P}^m(\tilde{\pi},g_m) ]\\
&= \frac{1}{\abs{D}!} \sum_{\tilde{\pi}}[ {\cal P}(\tilde{\pi}) - {\cal P}^m(\tilde{\pi},g_m) ]\ .
\end{aligned}
\end{equation*}
To conclude the proof, we argue that $\forall \tilde{\pi},\ {\cal P}(\tilde{\pi}) - {\cal P}^m(\tilde{\pi},g_m) \geq 0$. Since the transformation involves only slicing, all intra-slice dGaps remain the same for the original and partitioned indexes (or slices), while dGaps bridging across slices are shorter within the partitioned indexes. In expectation, the bridging dGaps are halved by the slicing process, and assuming a logarithmic encoding function (e.g. Delta encoding), about 1 bit is gained on account of each bridging dGap. As the number of nodes (or slices) $m$ increases, more dGaps bridge across slices. Hence, the expected difference $\Delta^m$ does not decrease with $m$.  
\comment{
We are interested in the aggregated indexes size change due to partitioning for a given termDoc matrix 
\begin{equation}
   \Delta^M = \E_{\ \Pi_\mathrm{r}}\left[\mathcal{L}\left(\pi\right)-\mathcal{L}\left(g^M\left(\pi\right)\right)\right]\ ,
\end{equation}
where $\mathcal{L}$ denotes the aggregating posting lists size calculation function, $\pi$ denotes an arbitrary permutation on the $D$ columns of the $T\times D$ termDoc matrix $\Mat{H}$, and $g^M$ denotes an arbitrary equal partitioning of the columns of the termDoc matrix into $M$ sub-indexes (assuming $M$ is a factor of $D$). In particular we denote the ensemble of uniform-at-random permutations by $\Pi_\mathrm{r}$, and the slicing partitioning\footnote{According to the slicing partitioning and assuming that $M$ is a factor of $D$, the $n$th $(D/M)$-tuple of the columns forms the $n$th sub-index.} by $g^M_\mathrm{s}$. Now, let us explicitly write the averaging over all random permutations
\begin{equation}\label{eq: explicit delta size for random docID}
     \Delta^M = \frac{1}{D!}\left(\sum_{\pi\in \Pi_\mathrm{r}}\mathcal{L}\left(\pi\right)-\sum_{\pi\in \Pi_\mathrm{r}}\mathcal{L}\left(g^M\left(\pi\right)\right)\right)\ .
\end{equation}
It can be easily verified that there is a one-to-one transformation between every permutation and partitioning pair $g^M(\pi)$, such that $g^M(\pi) = g^M_\mathrm{s}(\pi')$. Hence, expression \eqref{eq: explicit delta size for random docID} can be rewritten as
\begin{equation}
\begin{aligned}\label{eq: delta slicing}
     \Delta^M &= \frac{1}{D!}\left(\sum_{\pi\in \Pi_\mathrm{r}}\mathcal{L}\left(\pi\right)-\sum_{\pi'\in \Pi_\mathrm{r}}\mathcal{L}\left(g^M_\mathrm{s}\left(\pi'\right)\right)\right)\\
&=\E_{\ \Pi_\mathrm{r}}\left[\mathcal{L}\left(\pi\right)-
\mathcal{L}\left(g^M_\mathrm{s}\left(\pi\right)\right)\right]\ ,
\end{aligned}    
\end{equation}
where the last equality is since the original index is independent of the partitioning operation. Next, we claim that $\Delta^M\ge 0$ and that it is also non decreasing function of the number of nodes $M$. To show that $\Delta^M\ge 0$ we note that the inner term of \eqref{eq: delta slicing} is non-negative since all intra-slice dGaps remain the same for the original and partitioned indexes (or slices), while dGaps bridging across slices are shorter within the partitioned indexes. When the number of nodes (or slices) $M$ increases, more dGaps bridge across slices. Hence, the average difference $\Delta^M$ do not decrease with $M$.  
}

\subsubsection*{Index-Partitioning and URL Sorting}

Ideally, a postings list following URL-based assignment includes runs of small dGaps separated by long dGaps. To illustrate the impact of index-partitioning into $m$ nodes on the performance of URL sorting, consider a specific posting list which begins with a single large dGap of $N_1$, followed by a run of $R$ dGaps of 1, another large dGap of $N_2$, and another run of $R$ dGaps of 1, with $N_1,\ N_2\gg R\gg m$. Under Delta encoding, the size of the postings list is $\delta(N_1)+\delta(N_2)+2R\delta(1)$. It is easily verified that the average aggregated size after partitioning is approximated by  $m[\delta(N_1/m)+\delta(N_2/m)+2(R/m)\delta(1)]$. Hence, the difference in the postings list sizes after and before partitioning into $m$ nodes is approximately $m [\delta(N_1/m)+\delta(N_2/m)]-(\delta(N_1)+\delta(N_2))$. 
Since Delta encoding behaves logarithmically, partitioning increases the average overall size by approximately $(m-1)(\log_2 N_1 + \log_2 N_2)-2m\log_2 m$. Obviously, this oversimplified example does not represent all cases, but it teaches us that for URL sorting (and IH-URL sorting), the encoding of the partitioned large dGaps of the original list causes its aggregated size to increase.



\section{Document Distribution Schemes and Query Processing Time}\label{sec:queryproc}

The previous sections demonstrated the deleterious effect of random distribution of documents to nodes, on the aggregated index sizes. This section examines the impact of document distribution schemes on other factors affecting query processing time, and demonstrates the significant benefits of random distribution -- which make it the industry standard \cite{Arasu-etal_ACM-Internet-Tech_2001,Barroso2003,IIR,prefetching}. In particular, we qualitatively show that random distribution results in faster query processing than that achieved by the better compressing URL-based distribution scheme.
\comment{
RL: what claim above is proved by slicing??
which provides better index sizes\footnote{This claim is immediately proved since slicing is shown in Section \ref{sec: Analytical Insight on Results} to reduce the aggregated index size.}.
}

\subsection{Surrogates for Query Processing Time}

In order for our ensuing experiments and qualitative analysis to be independent of specific retrieval algorithms or computational platforms, we use surrogate measures that are highly correlated with query evaluation time, for both \textit{disjunctive} and \textit{conjunctive} query models. In what follows, let $q=\{ t_1,\ldots,t_k\}$ be a $k$-term query, and let $\ell(t)$ denote the number of postings in term $t$'s postings list. In disjunctive queries, disregarding various pruning and early termination schemes, retrieval algorithms must scan all lists to fully evaluate the query. Hence, a surrogate measure for the running time of a disjunctive query on a particular index partition would be $\sum_{t \in q}\ell(t)$. In a locally-partitioned index among $m$ nodes, query evaluation (again, disregarding timeout or pruning policies) must wait for the slowest partition to finish evaluating the query. Hence, we approximate the running time of $q$ on $m$ nodes in disjunctive semantics, $\mathcal{T}_d(q)$, by 
\begin{equation*}
    \mathcal{T}_d(q) =\max_{j=1,\ldots,m}\sum_{t \in q} \ell_j(t)\ ,
\end{equation*}
where $\ell_j(t)$ denotes the length of $t$'s postings list on the $j$'th partition. Moving to conjunctive models, queries are typically evaluated by join-flavored algorithms \cite{WAND,Dups2006} that rely on the ability to skip portions of postings lists where matches are known not to exist \cite{witten1}\footnote{We ignore the small overhead that such skipping mechanisms add to the lengths of the postings lists.}. Therefore, our surrogate for $q$'s running time on a particular index partition is the length of the postings list of its rarest term, $\min_{t \in q}\ell(t)$. In a distributed setting, the slowest partition dictates that 
\begin{equation*}
    \mathcal{T}_c(q) =\max_{j=1,\ldots,m}\min_{t \in q} \ell_j(t)\ .
\end{equation*}
We stress that we do not claim that these measures {\em equal} query running times - only that for most retrieval algorithms on RAM-resident indexes, they represent reasonable surrogates that are correlated with running times. 

\subsection{Experimental Evaluation}

We again use the TREC .gov2 corpus (see Section \ref{sec: setup}), and distribute its documents to servers using random distribution (RND), and two flavors of URL-based distribution. First, vanilla URL distribution (URL), where all documents are ordered lexicographically according to their URL and then evenly sliced and routed to servers; second, IH-URL distribution - where hosts are randomly ordered and same host documents are lexicographically sorted according by URL before being evenly sliced and routed to servers. We use the 150 queries of TREC topics 701-850, whose average length is 3.1 terms, and report the average $\mathcal{T}_c=\frac{1}{150}\sum_{q=701}^{850} \mathcal{T}_c(q)$ and the similarly defined average $\mathcal{T}_d$ resulting from the three document distribution schemes over all queries, to qualitative compare their average query processing time.

Figure \ref{fig: query} plots the $\mathcal{T}_c$ and $\mathcal{T}_d$ curves for the two query types as functions of the number of servers, for the three document distribution schemes RND, URL, and IH-URL. The figure reveals the significant benefit of RND over the URL-based assignment schemes in terms of query processing time, and furthermore that the difference between the curves induced by RND and the URL-based schemes increases with the number of servers. In particular, RND induces $\mathcal{T}_c$ and $\mathcal{T}_d$ curves that are an order of a magnitude lower (i.e. faster) than those induced by the URL-based schemes at $m=1000$ servers. A closer inspection of the RND curves reveal that their slope is approximately $-1$ in a $log-log$ scale. Hence, RND scheme induced $\mathcal{T}_c$ and $\mathcal{T}_d$ are proportionally inverse to the number of servers: ${\mathcal{T}_c},{\mathcal{T}_d}\propto \frac{1}{m}$. Finally, note the similar performance demonstrated by the two URL-based schemes, which is explained by the weak inter-host document similarity already observed in Section \ref{sec: results}.

The degradation in query processing time obtained by the URL-based distribution schemes can be intuitively explained by the fact that same host documents are similar (which is good for reducing the index size) and share many terms. Hence, placing them on the same partition yields unbalanced posting lists which increases query processing time due to the maximum operation included in the calculation of both $\mathcal{T}_c$ and $\mathcal{T}_d$.

\begin{figure}[t]
\centering
\includegraphics[scale=0.55]{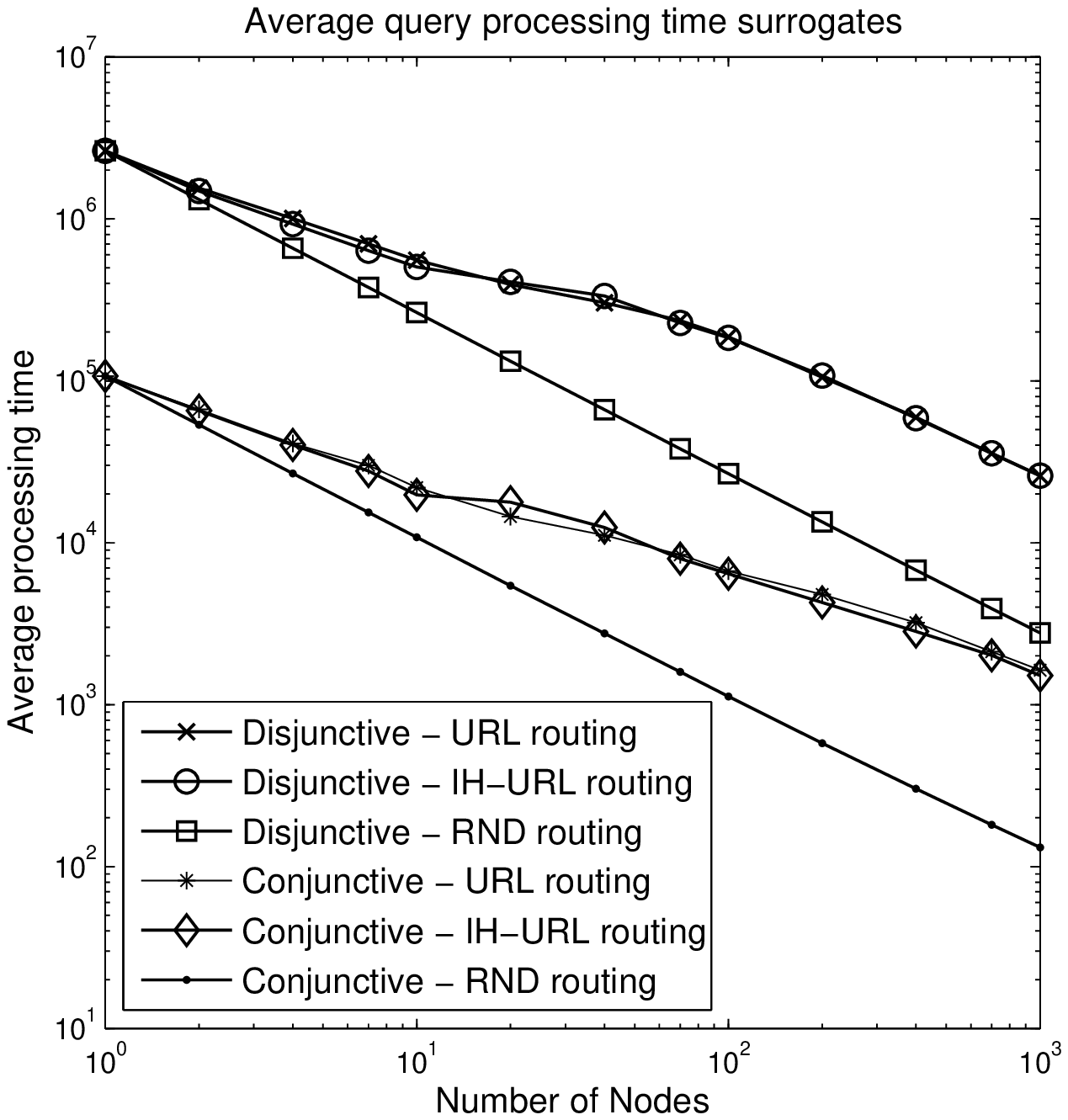}
\caption{Average query processing time surrogates vs. number of servers for different document routing schemes.}
\label{fig: query}
\end{figure}

\subsection{Analytical Evaluation}

This subsection analytically explains why the slopes of RND's $\mathcal{T}_c$ and $\mathcal{T}_d$ curves are inversely proportional to the number of servers $m$. For simplicity, we assume the document generation model of \cite{Chierichetti-Kumar-Raghavan-WWW2009}, in which terms are picked to document independently. Hence, for a disjunctive query $q$, we can equate $\mathcal{T}_d(q)$ to the most occupied among $m$ urns (servers) when $b_q=\sum_{t\in q} df(t)$ balls (postings) are randomly tossed to the urns ($df(t)$ denotes the document frequency of term $t$ in the entire corpus) \cite{Johnson-Kotz_Book1977}. 

\begin{prop}\label{prop: urn} For any $0 < \epsilon < 1$ and $\delta_\epsilon\le\delta < 2e-1$, 
\begin{equation*}
\mbox{Prob}\left(\ \mathcal{T}_d(q)\in\bracket{\frac{b_q}{m},
\frac{b_q}{m}\paren{1+\delta}}\ \right) > 1-\epsilon\ ,
\end{equation*}
\begin{equation*}
\mbox{where}\ \delta_\epsilon\triangleq \sqrt{\frac{4m}{b_q}\log\frac{m}{\epsilon}}\ .
\end{equation*}   
\end{prop}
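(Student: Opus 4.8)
The plan is to read $\mathcal{T}_d(q)=\max_{j=1,\dots,m} X_j$, where $X_j$ is the number of the $b_q$ postings landing on server $j$. Under the independent-term generation model of \cite{Chierichetti-Kumar-Raghavan-WWW2009} each ball is tossed to a uniformly random urn, so $X_j=\sum_i \mathbf{1}[\text{ball }i\to\text{urn }j]$ is a sum of $b_q$ independent $\mathrm{Bernoulli}(1/m)$ indicators, hence $X_j\sim\mathrm{Bin}(b_q,1/m)$ with mean $\mu=b_q/m$. I would split the two-sided containment into its two endpoints and handle them separately; note that the argument below only needs the marginal law of each $X_j$, not joint independence of the loads.

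The lower endpoint is essentially free. Since $\sum_{j} X_j = b_q$, the maximum load is at least the average, $\mathcal{T}_d(q)=\max_j X_j \ge \frac{1}{m}\sum_j X_j = \frac{b_q}{m}$, and this holds deterministically. Thus $\mathcal{T}_d(q)\ge b_q/m$ with probability $1$, and the entire probability budget $\epsilon$ may be spent on the upper endpoint: it suffices to prove $\Pr\paren{\mathcal{T}_d(q) > \frac{b_q}{m}\paren{1+\delta}} \le \epsilon$ for every $\delta\ge\delta_\epsilon$.

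For the upper endpoint I would apply a multiplicative Chernoff bound to a single urn and then a union bound over the $m$ urns. The single-load tail obeys $\Pr\paren{X_j\ge(1+\delta)\mu}\le \exp\paren{-\mu\bracket{(1+\delta)\ln(1+\delta)-\delta}}$, and the crux of the whole proof is the scalar inequality $(1+\delta)\ln(1+\delta)-\delta \ge \delta^2/4$ on the stated range $0<\delta<2e-1$. This is exactly where the hypothesis $\delta<2e-1$ is used: the threshold $\delta=2e-1$ is the point at which $(1+\delta)\mu=2e\mu$, i.e. the boundary between the moderate-deviation regime, where the rate function is well bounded below by the quadratic $\delta^2/4$, and the genuine large-deviation regime, where one would instead use a $2^{-(1+\delta)\mu}$-type bound. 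Granting the scalar inequality yields the clean form $\Pr\paren{X_j\ge(1+\delta)\mu}\le e^{-\mu\delta^2/4}$, and a union bound over the $m$ servers gives $\Pr\paren{\mathcal{T}_d(q)>(1+\delta)\mu}\le m\,e^{-\mu\delta^2/4}$.

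It remains to calibrate $\delta$. Substituting $\mu=b_q/m$ and the definition $\delta_\epsilon=\sqrt{\tfrac{4m}{b_q}\log\tfrac{m}{\epsilon}}$ gives $\mu\delta_\epsilon^2/4=\log(m/\epsilon)$, so at $\delta=\delta_\epsilon$ the union bound is exactly $m\,e^{-\log(m/\epsilon)}=\epsilon$; since $m\,e^{-\mu\delta^2/4}$ is decreasing in $\delta$, the same bound $\le\epsilon$ holds for all $\delta\ge\delta_\epsilon$. Combining with the deterministic lower endpoint yields $\Pr\paren{\mathcal{T}_d(q)\in\bracket{\tfrac{b_q}{m},\tfrac{b_q}{m}(1+\delta)}}\ge 1-\epsilon$, as claimed. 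I expect the only delicate step to be the one-variable verification of $(1+\delta)\ln(1+\delta)-\delta\ge\delta^2/4$ across the full range up to $2e-1$: the natural route is to compare the two sides via derivatives, checking that both they and their first derivatives agree at $\delta=0$ and then controlling the sign of the second derivative, which simultaneously pins down the constant $4$ appearing in $\delta_\epsilon$ and the cutoff $2e-1$.
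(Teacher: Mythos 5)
Your proposal is essentially the paper's own proof: the deterministic lower bound $x_{\max}\ge\mu$ (the maximum load is at least the average), a union bound over the $m$ urns, and the multiplicative Chernoff bound in the form $\Pr(x_j\ge(1+\delta)\mu)\le e^{-\mu\delta^2/4}$ for $\delta<2e-1$, calibrated so that $m\,e^{-\mu\delta_\epsilon^2/4}=\epsilon$. The only difference is that you unpack the Chernoff step into the scalar inequality $(1+\delta)\ln(1+\delta)-\delta\ge\delta^2/4$, which the paper leaves implicit; if you carry out the promised derivative check you will find this inequality (and hence the folklore range $\delta<2e-1$ that both you and the paper quote) actually fails for $\delta$ slightly above $4.1$, a shared imprecision that is immaterial here because the relevant $\delta_\epsilon$ is far below $1$.
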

\begin{proof}
Let $b_q$ balls be tossed randomly into $m$ urns, and let $x_j$ be the number of balls in the $j$th urn. In addition, denote the expected number of balls in an urn by $\mu\triangleq\frac{b_q}{m}$, and let $x_{max}=\max_{j=1\ldots,m}x_j$ be the maximal number of balls falling into some urn. Setting $\alpha\triangleq\mu\paren{1+\delta}$ for some $\delta>0$ we can write
\begin{equation*}
\begin{aligned}\label{eq: max urn 1}
Pr\paren{x_{max}\ge \alpha}&=Pr\paren{\bigcup_{j=1}^m x_j\ge \alpha}\\
&\le m\ Pr\paren{x_1\ge\alpha}\le m\ e^{-\frac{\mu\delta^2}{4}}\ ,
\end{aligned}
\end{equation*}
where the first inequality is due to the union bound, and the second inequality is achieved by applying Chernoff's bound and holds for $\delta < 2e-1$. Forcing the last term of the previous expression to be smaller than $0<\epsilon<1$, we have that $\delta$ must also satisfy 
\begin{equation*}
   \delta \geq \sqrt{\frac{4m}{b_q}\log{\frac{m}{\epsilon}}} \ .
\end{equation*}
The proof is completed by recalling that $x_{max}\ge \mu$.
\end{proof}

The average number of postings $b_q$ for the $N=150$ TREC topics 701-850 is about $2.6\times 10^6$, whereas the number of servers in this experiment does not exceed $10^3$. Therefore, we can apply Prop. \ref{prop: urn} and write
\begin{equation*}
    \mathcal{T}_d = \frac{1}{N}\sum_{q} \mathcal{T}_d(q) \approx \frac{1}{m\ N}\sum_{q} b_q = \frac{1}{m}\paren{\frac{1}{N}\sum_{q}\sum_{t\in q}df(t)}\ . 
\end{equation*}
Hence,  $\mathcal{T}_d$ is inversely proportional to the number of servers $m$, as observed in Fig. \ref{fig: query}.

\comment{
we have that for the dominant queries $\delta_0\le 0.135$ for $\epsilon=1\%$ and $m=1000$ ($\delta_0\le 0.04$ for $m=100$). Hence, we can apply Prop. \ref{prop: urn} and approximate the average query processing time by
\begin{equation*}
    \overline{\mathcal{T}}_d = \frac{1}{N}\sum_i \mathcal{T}_d(i) \approx \frac{1}{N\ m}\sum_i n(i) = \frac{1}{N\ m}\sum_i\sum_\ell t_\ell(i)\ . 
\end{equation*}
It is concluded that $\overline{\mathcal{T}}_d$ is proportionally inverse to the number of servers $m$, as observed in Fig. \ref{fig: query}.
}

\comment{Turning to the conjunctive queries we claim that the order of processing time for the $i$th query, $\mathcal{T}_c(i)$ equals to the maximum urn occupancy of the following urn model. According to this model there are colored balls (the number of colors equals the number of the query terms) where the number of balls of the $\ell$th color equals  the frequency of the $\ell$th term within the corpus $t_\ell(i)$. After the balls are randomly tossed into $m$ urns, we denote the color with least balls in each urn and remove all other balls. 
\begin{prop}\label{prop: con}
The maximum occupancy of the above urn model with $n_\ell$ balls of the $\ell$ 
th color and $m$ urns satisfies
\begin{equation*}
x_{max}\in\bracket{\frac{n}{m},
\frac{n}{m}\paren{1+\delta}}\ ,    
\end{equation*}
with probability larger than $1-\epsilon$ for any $0<\epsilon<1$ and
\begin{equation*}
    \delta \ge \delta_0\triangleq \sqrt{\frac{4m}{n}\log\frac{m}{\epsilon}}\ .
\end{equation*}   
\end{prop}
\begin{proof}
See Appendix \ref{app: urn}.
\end{proof}}

The expression $\mathcal{T}_c$ corresponding to conjunctive queries involves a max-min operation, which complicates the exact analysis. Therefore, we analyze an upper bound which is obtained by only considering the rarest term of each query.
In this case, $\mathcal{T}_c(q)$ equals the maximum urn occupancy of a simple urn model where $b_q=\min_{t \in q} df(t)$ balls are randomly tossed into $m$ urns. Since the average of $b_q$ for the $N=150$ queries of TREC topics 701-850 is about $10^5$ 
-- still at least two orders of magnitude over the number of servers $m$, we can apply Prop. \ref{prop: urn} and write
\begin{equation*}
    \mathcal{T}_c = \frac{1}{N}\sum_{q} \mathcal{T}_c(q) \approx \frac{1}{m\ N}\sum_{q} b_q = \frac{1}{m}\paren{\frac{1}{N}\sum_{q}\min_{t\in q}df(t)}\ . 
\end{equation*}
Hence,  as in the disjunctive case, $\mathcal{T}_c$ is inversely proportional to the number of servers $m$, as observed in Fig. \ref{fig: query}. It is noted that this approximated upper bound is tight since it has $-1$ slope in $log-log$ scale, and it includes the same constant as the experimented curve for $m=1$.

\comment{
we have that for the dominant queries $\delta_0\le 0.66$ for $\epsilon=1\%$ and $m=1000$ ($\delta_0\le 0.19$ for $m=100$). Hence, we can apply Prop. \ref{prop: urn} and approximate the average query processing time by
\begin{equation}
    \overline{\mathcal{T}}_c = \frac{1}{N}\sum_i \mathcal{T}_c(i) \approx \frac{1}{N\ m}\sum_i n(i) = \frac{1}{N\ m}\sum_i\min_\ell t_\ell(i)\ . 
\end{equation}
As with the disjunctive query case, it is concluded that $\overline{\mathcal{T}}_c$ is proportionally inverse to the number of servers $m$, as observed in Fig. \ref{fig: query}. It is noted that this approximated upper bound is tight since it has $-1$ slope in loglog scale and it includes the same constant as the experimented curve for $m=1$.

The degradation in query processing time obtained by the URL based partitioning schemes can be intuitively explained by the fact that same host documents are similar (which is good for reducing the index size) and share many terms. Hence, placing them on the same partition yields unbalanced posting lists which increases query processing time due to the maximum operation included in the calculation of both $\mathcal{T}_c$ and $\mathcal{T}_d$.    
}

\vspace{-0.1cm}
\section{Conclusions}\label{sec: conc}

We studied the impact of random index-partitioning on the performance of various docId assignment techniques, and demonstrated the deleterious effect of random index-partitioning in terms of the aggregated size of the partitioned index. 
We conjecture that our findings, based on the TREC .gov2 corpus and backed by some analysis, also hold at web scale - that randomized index-partitioning generates local collections that state-of-the-art ordering schemes can compress with relatively minor improvement over random ordering. The main reason for that, is that random index-partitioning causes pages of the same web host to be scattered over many nodes, resulting in local collections that are ``sparse'' in terms of URL continuity and that include few documents having high similarity with each other. Therefore, it follows that from a pure index size perspective, global index-partitioning where terms (instead of documents) are partitioned between nodes will compress better than the industry standard of randomized local index-partitioning. We also show via experimental evaluation that most of the effectiveness of URL sorting is achieved by merely clustering same host documents together. Moreover, we demonstrate that while URL sorting the documents within the hosts does yield additional improvement, keeping the lexical URL ordering of the hosts brings only negligible benefit. Lastly, we demonstrate the benefits of the industry standard random partitioning of documents to servers in terms of query processing time, over URL-based partitioning schemes.

\vspace{-0.1cm}





\end{document}